\newcommand{\ie}{\emph{i.e.}}
\newcommand{\eg}{\textit{e.g.}}
\newcommand{\tr}{\operatorname{tr}}
\newtheorem{observation}{Observation}
\begin{document}
\title{Genuine multiparticle entanglement of permutationally invariant states}

\author{Leonardo~Novo}
\affiliation{Physics of Information Group, Instituto de Telecomunica\c{c}\~oes, P-1049-001 Lisbon, Portugal}
\affiliation{Naturwissenschaftlich-Technische Fakult\"at, Universit\"at Siegen, Walter-Flex-Str.~3, D-57068 Siegen, Germany}
\author{Tobias~Moroder}
\affiliation{Naturwissenschaftlich-Technische Fakult\"at, Universit\"at Siegen, Walter-Flex-Str.~3, D-57068 Siegen, Germany}
\author{Otfried~G\"uhne}
\affiliation{Naturwissenschaftlich-Technische Fakult\"at, Universit\"at Siegen, Walter-Flex-Str.~3, D-57068 Siegen, Germany}

\date{\today}

\begin{abstract}
We consider the problem of characterizing genuine multiparticle entanglement for permutationally invariant states using the approach of PPT mixtures. We show that the evaluation of this necessary biseparability criterion scales polynomially with the number of particles. In practice, it can be evaluated easily up to ten qubits and improves existing criteria significantly. Finally, we show that our approach solves the problem of characterizing genuine multiparticle entanglement for permutationally invariant three-qubit states.
\end{abstract}

\pacs{03.67.Mn, 03.65.Ud}
\maketitle

\section{Introduction}

Quantum state analysis of large-scale systems is non-trivial since many generic methods only work for, at most, small system sizes. Experimentally, the manipulation and control of several qubits has  already become standard, and current records comprise, for instance, entanglement between $14$ qubits in ion traps~\cite{monz11a}, ten-qubit entanglement using hyper-entangled photons~\cite{gao10} or the generation of eight entangled photons~\cite{yao11a,huang11a}. Even for such medium scale systems, the available analysis tools can be rather cumbersome like, for instance, the task of quantum state tomography, \ie, the process to determine the underlying quantum state by suitable measurements and hence gaining full information. The standard Pauli tomography scheme~\cite{james01a} scales exponentially such that a feasible, generic tomography protocol for $14$ qubits is out of scope. 

However, quite often one intends to work with special classes of states only. This offers the possibility that one can tailor or optimize the analysis tool for those more restricted sets. Such more efficient tomography protocols have been recently designed for generic states of low rank~\cite{gross10a}, particularly important low rank states like matrix product states~\cite{cramer10a} or multi-scale entanglement renormalization ansatz states~\cite{landon_cardinal12a}, or for states which possess some further symmetry like permutation invariance~\cite{Toth2010}. 

Though full information on a quantum state is appealing, it is usually dispensable since one is often more interested in a few key characteristics or properties of the states, which are also used to compare different experiments. From the plethora of interesting characteristics, the main objective of multipartite systems lies on genuine multipartite entanglement \cite{hororeview, tg09}. This is the strongest phenomenon of quantum mechanical correlations within such systems that cannot be explained via sufficient control on systems of less particle size, known as biseparable states. Despite its importance, characterization and detection of this kind of resource is still hard and only recently some methods have been proposed~\cite{guehneseevinck, huber_dicke, huber_ghzwid, vicente11a, laskowski11a, siewert}. A very promising detection method constitutes the concept of PPT mixtures~\cite{Jungnitsch2011,jungnitsch11b}, the generalization of the positive partial transpose (PPT) criterion~\cite{Peres1996} to the multipartite setting.

In this paper we tailor, similarly to the tomography protocols, the detection of genuine multipartite entanglement via PPT mixtures to permutationally invariant states. We show that the question whether a given permutationally invariant state possesses a PPT mixture requires resources which scale only polynomially in the number of qubits. Thus, in combination with the tomography protocol~\cite{Toth2010} (and its variants~\cite{dariano03a,adamson07a,klimov13a}) and its efficient state reconstruction algorithm~\cite{Moroder2012}, we develop an additional tool to analyze the data after such a quantum state tomography process. At this point, we would like to stress that the derived detection method does not rely on the fact that the underlying state indeed possesses this symmetry: If the permutationally invariant part of a quantum state is entangled, then the complete state must be entangled, too~\cite{Toth2010}. As a further result we prove that the criterion of PPT mixtures is necessary and sufficient to decide whether a given permutationally invariant three-qubit states is genuine multipartite entangled or not. Thus we obtain another interesting class of states, similar to graph-diagonal states of three and four qubits~\cite{guehne11a}, where this approach completely solves the question of genuine multiparticle entanglement. As examples, we study states like Greenberger-Horne-Zeilinger (GHZ) and Dicke states and obtain strongly improved detection statements for up to ten qubits. We would like to add that in the present paper, we focus on the numerical evaluation of the criterion of PPT mixtures. Of course, also an analytic approach via the construction of appropriate witnesses is possible. This can lead to criteria which can be used for arbitrary particle numbers. Results on this problem will be reported elsewhere \cite{marcel2013}.

The structure of this paper is as follows: Section~\ref{sec:II} summarizes the background on multipartite entanglement, the concept of PPT mixtures and permutationally invariant states. The theoretical results of this paper are given in Section~III, in particular the aforementioned results about structure and scaling of permutationally invariant PPT states and about the sufficiency statement for the three-qubit permutationally invariant case. Section~\ref{sec_sdpdetails} collects some details of our numerical implementation via semidefinite programming (SDP), which is used afterwards to test and to compare our method on special family of states in Section~\ref{sec:V}. Finally, we summarize in Section~\ref{sec:VI}.

\section{Preliminaries}\label{sec:II}

\subsection{PPT mixtures}\label{sec:IIA}
At first, let us review the concept of PPT mixtures~\cite{Jungnitsch2011}, which represents a method to detect \emph{genuine multipartite entanglement}. Similarly to the PPT criterion of the bipartite case~\cite{Peres1996}, it is a suitable relaxation on the level of quantum states which gets more tractable.


Let us first explain the method for a system of three particles because this already highlights the idea. A tripartite state is separable with respect to the bipartition $A|BC$ if it can be written as 
\begin{equation} \rho_{A|BC}^{sep}=\sum_k q_k \ket{\phi_A^k}\bra{\phi_A^k}\otimes\ket{\psi_{BC}^k}\bra{\psi_{BC}^k},
\end{equation}
where $q_k$ form a probability distribution, \ie, $\sum_k q_k=1$ and $q_k\geq 0$ for all $k$. The definition is analogous for the other possible bipartitions $B|AC$, $C|AB$. A \emph{biseparable state} is now defined as a convex combination of states which are biseparable with respect to a specific bipartition, more precisely the state can be written as
\begin{equation}
\rho_{ABC}^{bs}= p_1 \rho_{A|BC}^{sep}+p_2\rho_{B|AC}^{sep}+p_3\rho_{C|AB}^{sep}.
\end{equation}
If a state is not biseparable, it is called \emph{genuinely multipartite entangled}.

Naturally, since the bipartite separability problem is already hard, the full characterization of biseparable states can only be worse.
The idea of the PPT mixtures is to define a set of states which includes the set of biseparable states, but which is much easier to characterize than the latter. At the core of this method lies the PPT or Peres-Horodecki criterion, which was first introduced in Ref.~\cite{Peres1996}. This criterion, based on the operation of partial transposition, is a simple and powerful method to detect bipartite entanglement. If a bipartite system $\rho_{AB}$ is expanded in a chosen tensor product basis as $ \rho_{AB}=\sum_{ijkl}\rho_{ij,kl}\ket{ij}\bra{kl}$, its partial transpose with respect to the first subsystem is defined as $\rho_{AB}^{T_A}=\sum_{ijkl}\rho_{ij,kl}\ket{kj}\bra{il}$. The PPT criterion says that if a state $\rho_{AB}$ is separable, its partial transpose is positive semidefinite, \ie, it has no negative eigenvalues, or, in other words, one says the state is PPT. This implies that if $\rho_{AB}^{T_A}$ has one or more negative eigenvalues (the state is then called an NPT state), it must be entangled. Throughout this work we will often use the term positive meaning positive semidefinite.

The generalization of this criterion to the multipartite case, as introduced in Ref.~\cite{Jungnitsch2011}, goes as follows: Similarly to the definition of a separable state with respect to the bipartition $A|BC$, one can define a state $\rho_{A|BC}^{ppt}$ to be PPT with respect to that partition, and similarly for the other bipartitions. 
In analogy to the definition of a biseparable state, a PPT mixture of a three party state is defined as a convex combination of PPT states with respect to a specific bipartition
\begin{equation}
\rho^{pmix}_{ABC}= p_1\rho_{A|BC}^{ppt}+p_2\rho_{B|AC}^{ppt}+p_3\rho_{C|AB}^{ppt}.
\end{equation}
From the PPT criterion, we know that all separable states for a fixed bipartition are contained in the set of all PPT states for the same bipartition. This then implies that the set of PPT mixtures contains the set of biseparable states. Consequently, if a state is not a PPT mixture, it is genuinely multipartite entangled. In Fig.~\ref{fig:pptmixtures}, we can see a schematic representation of the set of PPT mixtures and the set of biseparable states for this three particle case.

\begin{figure}[htb]
  \centering
  \includegraphics[scale=0.22]{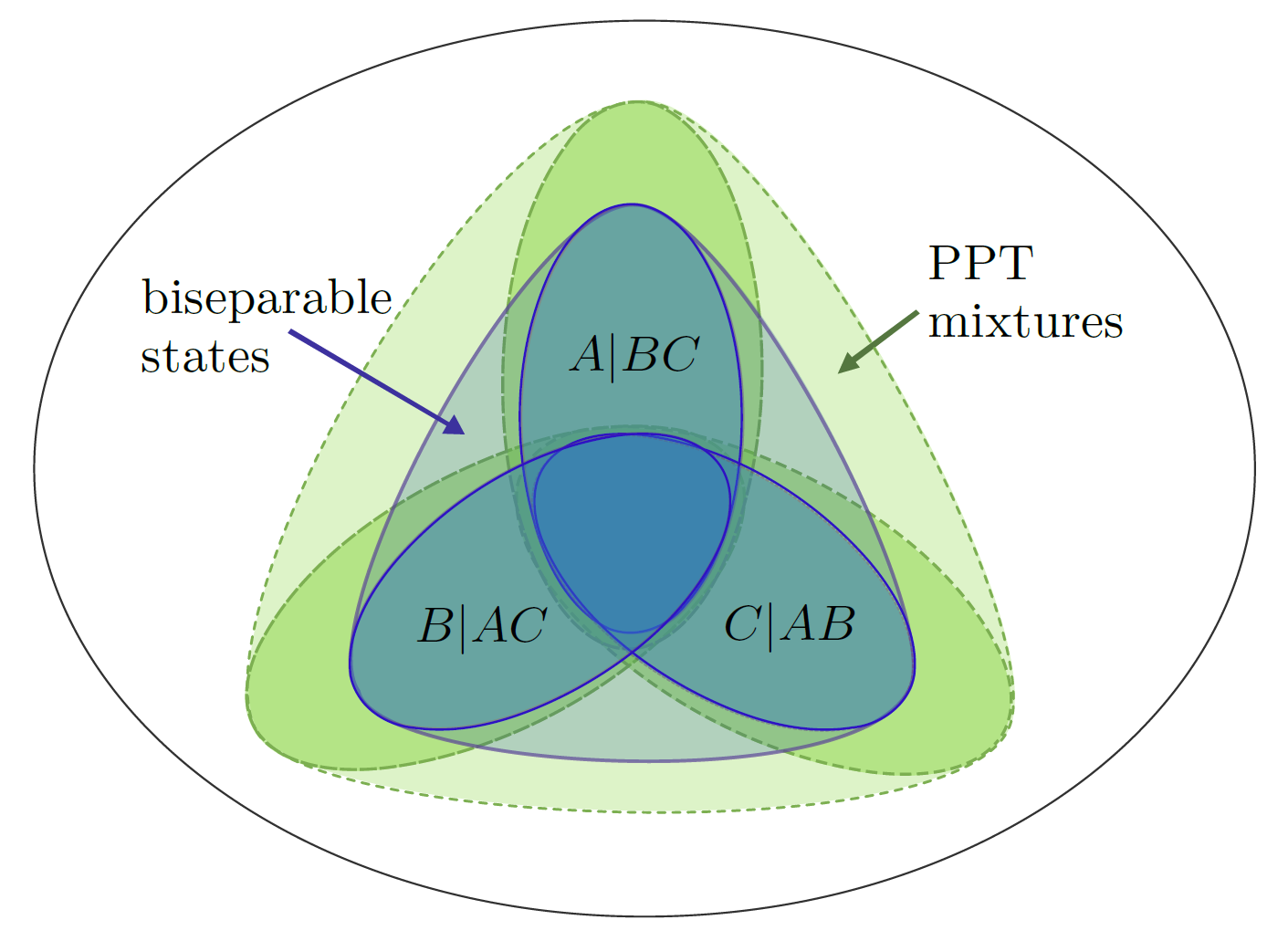}
  \caption{Here we see, for three particle states, a scheme of the biseparable states with respect to the three possible bipartitions $A|BC$, $B|AC$ and $C|AB$ (surrounded by solid blue lines). Each of these is contained in the PPT states for each respective bipartition (surrounded by dashed green lines). The larger solid blue line enveloping the other solid blue lines represents the set of biseparable states and similarly the larger dashed green line defines the set of PPT mixtures.}
  \label{fig:pptmixtures}
\end{figure}

The generalization to the $N$-partite case is straightforward. A general biseparable state of $N$ parties can be written as
\begin{equation}
\rho^{bs}=\sum\limits_{\substack{all~ bipart.\\M|\overline{M}}} p_{M|\overline{M}}\rho^{sep}_{M|\overline{M}},
\end{equation}
where the states are separable with respect to partition $M \subset \{1,\ldots,N\}$ and its complement $\overline{M}$. The sum runs over all possible bipartitions $M|\overline{M}$ of the $N$ particles.
Analogously, a PPT mixture of $N$ parties is defined by 
\begin{equation}\label{pptmixture}
\rho^{pmix}=\sum\limits_{\substack{M|\overline{M}}} p_{M|\overline{M}} \rho^{ppt}_{M|\overline{M}}=\sum\limits_{\substack{M|\overline{M}}} P_{M|\overline{M}},
\end{equation}
with operators $P_{M|\bar{M}} \equiv p_M\rho^{ppt}_{M|\overline{M}}$ that are positive and PPT with respect to $M$.

The motivation of using the concept of PPT mixture is that the problem of determining whether an arbitrary state is a PPT mixture or not can be formulated in terms of a semidefinite program (SDP) \cite{SDP06}, which makes it an easy to implement criterion to detect genuine multipartite entanglement. This SDP reads as follows
\begin{align}
\label{SDP}
{\rm min}&~~-s \\ 
\nonumber
{\rm s.t.}&~~\rho=\sum\limits_{\substack{M|\overline{M}}} P_{M|\overline{M}}, \\
\nonumber
&~~P_{M|\overline{M}}\geq s\mathbb{1},~~ {P_{M|\overline{M}}}^{T_M}\geq s\mathbb{1},~~ \mbox{for all } {M|\bar{M}},
\end{align}
where the notation $P_{M|\overline{M}}\geq s\mathbb{1}$ means that $P_{M|\overline{M}}- s\mathbb{1}$ is a positive semidefinite matrix. If the result of the optimization $s_{opt}$ is non-negative, then the state $\rho$ is a PPT mixture. Otherwise, it is genuinely multipartite entangled. Note that in Ref.~\cite{Jungnitsch2011}, the semidefinite program is written in the so-called dual form, which can be interpreted as a search for appropriate entanglement witnesses which are non-negative on all PPT mixtures. However, both forms are equivalent and thus detect the same states.

In a standard computer, though, this SDP can only be applied to generic states of up to five or six qubits~\cite{pptmixer}. In fact, it can be seen that the difficulty of this program scales exponentially with the number of qubits. Note that the number of different inequivalent bipartitions of a system of $N$ particles is given by $2^{N-1}-1$. For each of these bipartitions the Hermitian matrix $P_{M|\overline{M}}$ has $4^N$ free parameters, which are the variables of the semidefinite program (in fact, one of the matrices is fixed because of the equality constraint, but all the others are free). 

We will see in the following that if we restrict ourselves to permutationally invariant states, both the number of partitions and the number of SDP variables scale only polynomially. Furthermore, by an efficient decomposition of the matrices which have full or some permutation invariance it is possible to check the positivity conditions in terms of smaller blocks (see Sec.~\ref{sec_PIstates} and \ref{sec_sdpdetails}). This finally allows us to construct a SDP able to detect genuine multipartite entanglement for larger systems.

\subsection{Permutationally invariant states}\label{sec_PIstates}
Many experiments that aim at creating genuine multipartite entanglement, are designed in such a way that the generated state is invariant under particle interchange. Famous examples are the GHZ or the Dicke states. Mathematically, for any density matrix $\rho$ a permutationally invariant (PI) density matrix can be constructed via
\begin{equation}
\label{eq_PIstate}
\rho_{(1\dots N)}= [~\rho~]_{PI} =\dfrac{1}{N!}\sum_{\pi\in S_N}V(\pi) \rho V^\dagger(\pi)
\end{equation}
where $V(\pi)$ is a representation of the permutation $\pi\in S_N$ acting on the Hilbert space of $N$ qubits. The brackets $(1\dots N)$ should denote invariance under permutations between any of the $N$ qubits. We will sometimes employ the notation ${\left[~.~\right]}_{PI}$ to explicitly refer to this operation in order to shorten the expressions. It can be seen that Eq.~\eqref{eq_PIstate} implies that 
\begin{equation}
{\left[~\rho~\right]}_{PI}=V(\pi){\left[~\rho~\right]}_{PI}V(\pi)^{\dagger},~\mbox{for all } \pi\in S_N.
\end{equation}
A natural basis to write permutationally invariant states is given by the coupled spin basis, for which such states attain a particular simple block diagonal form,\eg,~\cite{Moroder2012,baragiola10a}. In this basis, the Hilbert space of $N$ qubits is decomposed as
\begin{equation}
\mathcal{H}=(\mathbb{C}^2)^{\otimes N}= 
\overset{N/2}{\underset{j=j_{min}}{\bigoplus}}\mathcal{H}_j\otimes \mathcal{K}_j
\end{equation}
with $j_{min} \in \{0,1/2\}$ depending on whether $N$ is even or odd. Here, $\mathcal{H}_j$ are the spin Hilbert spaces of dimension $2j+1$ and $\mathcal{K}_j$ are called the multiplicative spaces, whose dimension is given by
\begin{equation}
\dim(\mathcal{K}_j)=\left(\begin{array}{c}N\\N/2-j\end{array}\right)-
\left(\begin{array}{c}N\\N/2-j-1\end{array}\right),
\end{equation}
for $j<N/2$ and $\dim(\mathcal{K}_{N/2})=1$. The advantage of this decomposition is that any permutation $V(\pi)$ will only act non-trivially onto the multiplicative spaces, that is, they can be written as
\begin{equation}\label{eq:schur_weyl}
V(\pi)=\overset{N/2}{\underset{j=j_{min}}{\bigoplus}}\mathbb{1}\otimes \mathcal{V}_j(\pi),
\end{equation}
where $\mathcal{V}_j(\pi)$ is an irreducible representation of $S_N$ acting on $\mathcal{K}_j$~\cite{Christandl06,B1996}. This will become important shortly. Finally, we will denote the basis states by $\ket{j,m,\alpha_j}$, where $\ket{j,m}\in\mathcal{H}_j$ and $\ket{\alpha_j}\in\mathcal{K}_j$. These states $\ket{j,m,\alpha_j}$ are eigenstates of $\bf{J}^2$ and $J_z$, where ${\it{\bf{J}}}$ is the total angular momentum operator, while $J_z$ is the projection of $\bf{J}$ in the $z$ direction, with
\begin{align}
{\bf{J}}^2 \ket{j,m,\alpha_j}&=\hbar^2j(j+1)\ket{j,m,\alpha_j},\\
J_z\ket{j,m,\alpha_j}&=\hbar m\ket{j,m,\alpha_j}.
\end{align}

Any permutationally invariant state $\rho_{(1\dots N)}$ can in this formalism be written as, \eg,~\cite{Moroder2012,baragiola10a},
\begin{equation}
\label{eq:block_structure}
\rho_{(1\dots N)}=\bigoplus\limits_{j=j_{min}}^{N/2}p_j \rho_j\otimes\dfrac{\mathbb{1}_{\mathcal{K}_j}}{\dim(\mathcal{K}_j)}\\
=\bigoplus\limits_{j=j_{min}}^{N/2} B_j\otimes\mathbb{1}_{\mathcal{K}_j}, 
\end{equation}
with states $\rho_j$ of $\mathcal{H}_j$ and a probability distribution $p_j$. Complete knowledge of all probabilities $p_j$ and all states $\rho_j$ gives a complete characterization of $\rho_{(1\dots N)}$. Furthermore, we defined
\begin{equation}
B_j\equiv\dfrac{p_j\rho_j}{\dim(\mathcal{K}_j)},
\end{equation}
which are the blocks that appear in the diagonal of the PI state $\rho_{(1\dots N)}$. Each of these blocks $B_j$ appears in the diagonal exactly $\dim(\mathcal{K}_j)$ times as shown in Fig.~\ref{fig:block_structure}. From this structure, it is straightforward to compute the number of parameters needed to define a PI state given by 
\begin{equation}
\sum_{j=j_{min}}^{N/2}(2j+1)^2=\left(\begin{array}{c}N+3\\N\end{array}\right) = \mathcal{O}(N^3),
\end{equation}
which is much smaller than the $4^N-1$ parameters needed to characterize a general state. Apart from the better scaling, this block structure is also very important for the formulation of the SDP. It will be shown in Sec.~\ref{sec_sdpdetails} that all constraints of the SDP can be translated into appropriate constraints of the blocks. For instance, $\rho_{(1\dots N)}\geq 0$ is equivalent to $B_j\geq 0$ for all $j$. This is the main reason of the polynomial scaling in the end.
\begin{figure}[t]
\centering
\includegraphics[width=0.5\textwidth,trim=5cm 15cm 4cm 5cm]{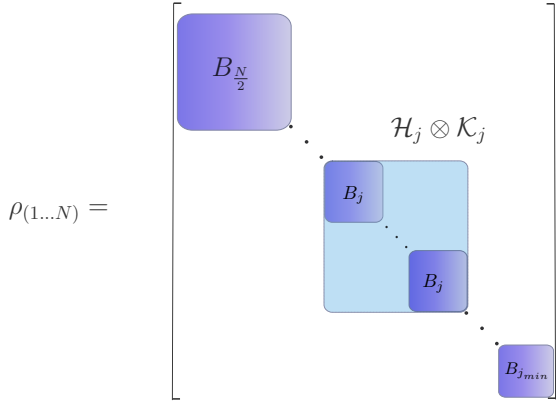}
\caption{Due to the form given by Eq.~(\ref{eq:block_structure}), a PI state has a block diagonal structure as shown above, in a suitably ordered basis. Each block $B_j$ appears $\dim(\mathcal{K}_j)$ times in the diagonal.}
  \label{fig:block_structure}
\end{figure}

Before we continue we will show how to prove Eq.~\eqref{eq:block_structure} and, at the same time, a way to carry out the operation ${\left[~.~\right]}_{PI}$ without computing the matrices $V(\pi)$. In general, this would be very hard since, for a $N$ particle system, $V(\pi)$ has dimension $4^N$ and there are $N!$ different permutations that need to be considered. Although we are always dealing with quantum states in the derivation, this block structure appears for any permutationally invariant operator. Starting with a general density matrix
\begin{equation}
\rho=\sum_{\substack{jj^\prime mm'\\\alpha_j\alpha_{j'}}}
\rho^{\alpha_j\alpha_{j'}}_{\substack{jj'mm'}}\ket{j,m,\alpha_j}\bra{j',m',\beta_{j'}},
\end{equation}   
we have
\begin{align}
&\left[~\rho~\right]_{PI}=\sum_{\substack{jj'mm'\\\alpha_j\alpha_{j'}}}
\rho^{\alpha_j\alpha_{j'}}_{\substack{jj'mm'}}
\left[\ket{j,m,\alpha_j}\bra{j',m',\beta_{j'}}\right]_{PI}
\nonumber
\\
&=\sum
\rho^{\alpha_j\alpha_{j'}}_{\substack{jj'mm'}}
\ket{j,m}\bra{j',m'}\otimes\!\sum_\pi \! \dfrac{\mathcal{V}_j(\pi)\ket{\alpha_j}\bra{\alpha_{j'}}\mathcal{V}_{j'}^{\dagger}(\pi)}{N!}
\nonumber
\end{align}
\begin{align}
&=\sum
\rho^{\alpha_j,\alpha_{j'}}_{\substack{jj'mm'}}
\ket{j,m}\bra{j',m'}\otimes\dfrac{\mathbb{1}_{\mathcal{K}_j}\delta_{jj'}
\tr(\ket{\alpha_j}\bra{\alpha_{j'}})}{\dim(\mathcal{K}_j)}
\nonumber
\\
&=\sum_{\substack{jmm'}}\left(\sum_{\alpha_j}\frac{
\rho^{\alpha_j\alpha_j}_{\substack{jjmm'}}}{\dim(\mathcal{K}_j)}\right)
\ket{j,m}\bra{j,m'}\otimes\mathbb{1}_{\mathcal{K}_j}.
\end{align}
In the first step, we used Eq.~\eqref{eq:schur_weyl}, while in the second we made use of Schur's lemma. Our proof of Eq.~\eqref{eq:block_structure} is similar to what is shown in Ref.~\cite{Moroder2012} except that here we obtain explicitly the entries of the blocks $B_j$ from the entries of $\rho$. In order to see clearly the meaning of the result obtained, we define
\begin{equation}
B_j^{\alpha_j}=\sum_{mm'}\rho^{\alpha_j\alpha_j}_{\substack{jjmm'}}
\ket{j,m}\bra{j,m'},
\end{equation}
such that, this way, the result simply reads
\begin{equation}\label{eq:block_entries}
B_j=\sum_{\alpha_j}\dfrac{B_j^{\alpha_j}}{\dim(\mathcal{K}_j)}.
\end{equation}
This means that to calculate the blocks of $\rho_{(1\dots N)}$ one has to take the average over the multiplicative spaces of the blocks of $\rho$ associated with the angular momentum~$j$.

\section{PPT mixtures of PI states}

\subsection{Characterization of PI PPT mixtures}
Here, two of the main analytical results of this work are presented, in the form of two observations. In the first one, we derive a simplified equation which characterizes a PI PPT mixture, the PPT mixture of a PI state. We show that without losing generality we can restrict the sum over bipartitions only to the ones with different number of particles on one side. Furthermore, we can impose symmetries on the unnormalized PPT states that need to be considered. The second observation proves that the number of parameters necessary to characterize a PI PPT mixture is of $\mathcal{O}(N^7)$.

\begin{observation} 
While Eq.~\eqref{pptmixture} characterizes a general PPT mixture, the equation that characterizes a PI PPT mixture can without loss of generality be written as 
\begin{equation}\label{pptmixturePI}
\rho_{(1\dots N)}^{pmix}=\sum\limits_{\substack{k=1}}^{N^\prime/2}\sum_{\pi\in S_N}V(\pi)Q_{(1\dots k)|(k+1\dots N)}V(\pi)^{\dagger},
\end{equation}
where $Q_{(1...k)|(k+1...N)}$ is an unnormalized PPT state for the partition $1...k|k+1...N$ which is additionally invariant under permutations among the first $k$ or the last $N-k$ qubits. We have $N^\prime\in \{N,N-1\}$ if $N$ is even or odd.
\end{observation}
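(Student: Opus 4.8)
The plan is to start from an arbitrary PPT-mixture decomposition of the state and symmetrize it in stages, using two elementary facts. First, partial transposition is covariant under qubit permutations: if $P$ is positive and PPT with respect to a bipartition $M|\overline{M}$, then $V(\pi)PV(\pi)^{\dagger}$ is positive and PPT with respect to $\pi(M)|\pi(\overline{M})$, since $[V(\pi)PV(\pi)^{\dagger}]^{T_{\pi(M)}}=V(\pi)P^{T_M}V(\pi)^{\dagger}\geq 0$. Second, a sum of operators that are each positive and PPT with respect to the \emph{same} bipartition is again positive and PPT with respect to that bipartition. I will also repeatedly use the ``shift identity'' $\sum_{\pi\in S_N}V(\pi\sigma)XV(\pi\sigma)^{\dagger}=\sum_{\pi\in S_N}V(\pi)XV(\pi)^{\dagger}$ for any fixed $\sigma\in S_N$, valid because $\pi\mapsto\pi\sigma$ is a bijection of $S_N$.

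The reverse direction of the characterization is immediate: if $\rho$ has the form of Eq.~\eqref{pptmixturePI}, each summand $V(\pi)Q_{(1\dots k)|(k+1\dots N)}V(\pi)^{\dagger}$ is positive and PPT with respect to $\pi(\{1,\dots,k\})\,|\,\pi(\{k+1,\dots,N\})$ by covariance, so $\rho$ is a PPT mixture, and it is permutationally invariant because $\{V(\pi)\}_{\pi\in S_N}$ forms a group. So the content lies in the forward direction.

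For the forward direction, write a PI PPT mixture as $\rho_{(1\dots N)}^{pmix}=\sum_{M|\overline{M}}P_{M|\overline{M}}$ according to Eq.~\eqref{pptmixture}. Since the state is PI, applying the projection $[~\cdot~]_{PI}$ leaves it unchanged, so $\rho_{(1\dots N)}^{pmix}=\frac{1}{N!}\sum_{\pi}\sum_{M|\overline{M}}V(\pi)P_{M|\overline{M}}V(\pi)^{\dagger}$. Organize the bipartitions by the size $k=\min(|M|,|\overline{M}|)\in\{1,\dots,\lfloor N/2\rfloor\}$; for each bipartition of size $k$ choose a permutation $\sigma_M$ mapping $\{1,\dots,k\}$ onto the selected size-$k$ representative of the pair, and write $P_{M|\overline{M}}=V(\sigma_M)\tilde{P}_M V(\sigma_M)^{\dagger}$ with $\tilde{P}_M=V(\sigma_M)^{\dagger}P_{M|\overline{M}}V(\sigma_M)$ positive and PPT with respect to the canonical partition $(1\dots k)|(k+1\dots N)$. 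Substituting and invoking the shift identity (for each fixed $M$) to absorb $\sigma_M$ collapses all bipartitions of a given size onto the canonical one:
\begin{equation}
\rho_{(1\dots N)}^{pmix}=\sum_{k=1}^{\lfloor N/2\rfloor}\sum_{\pi\in S_N}V(\pi)\,\tilde{Q}_{k}\,V(\pi)^{\dagger},\quad \tilde{Q}_{k}=\frac{1}{N!}\sum_{|M|=k}\tilde{P}_M,
\end{equation}
where each $\tilde{Q}_{k}$ is positive and PPT with respect to $(1\dots k)|(k+1\dots N)$ by the second fact above.

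It remains to install the additional invariance under the stabilizer subgroup $S_k\times S_{N-k}$ of the canonical bipartition. Replace $\tilde{Q}_{k}$ by its average $Q_{(1\dots k)|(k+1\dots N)}=\frac{1}{k!(N-k)!}\sum_{\tau\in S_k\times S_{N-k}}V(\tau)\tilde{Q}_{k}V(\tau)^{\dagger}$. Every $\tau\in S_k\times S_{N-k}$ fixes the canonical bipartition, so by covariance each term, and hence $Q_{(1\dots k)|(k+1\dots N)}$, stays positive and PPT with respect to it; and by construction $Q_{(1\dots k)|(k+1\dots N)}$ is invariant under permutations of the first $k$ and of the last $N-k$ qubits. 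The shift identity once more gives $\sum_{\pi}V(\pi)Q_{(1\dots k)|(k+1\dots N)}V(\pi)^{\dagger}=\sum_{\pi}V(\pi)\tilde{Q}_{k}V(\pi)^{\dagger}$, so the decomposition is unchanged, and with $\lfloor N/2\rfloor=N'/2$ for $N'\in\{N,N-1\}$ this is precisely Eq.~\eqref{pptmixturePI}. The only point requiring care is the bookkeeping of the permutation re-indexings together with the observation that none of the averagings can destroy positivity or the PPT property, because each averaging group preserves the relevant bipartition; there is no genuine analytic obstacle.
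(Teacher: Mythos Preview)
Your proof is correct and follows essentially the same route as the paper: apply the PI projection to a generic PPT-mixture decomposition, conjugate each $P_{M|\overline{M}}$ onto the canonical bipartition $(1\dots k)|(k+1\dots N)$ via a chosen permutation, absorb that permutation into the $S_N$ sum by the shift identity, and finally symmetrize over the stabilizer $S_k\times S_{N-k}$ using the same shift trick. The only cosmetic differences are that you package the two key facts (covariance of partial transpose under permutations and closure of PPT under sums) explicitly at the outset and treat the reverse direction, which the paper leaves implicit.
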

\begin{proof}
Let $\rho_{(1\dots N)}$ be a permutationally invariant state which is a PPT mixture. Then, combining Eq.~\eqref{pptmixture} and Eq.\eqref{eq_PIstate} we can write
\begin{equation}\label{pptmixerPI2}
\rho^{pmix}_{(1\dots N)}=\sum\limits_{\substack{M|\overline{M}}}\dfrac{1}{N!}\sum_{\substack{V(\pi)\in S_N}} V(\pi)P_{M|\overline{M}} V(\pi)^\dagger.
\end{equation}
Now, let $|M|$ denote the number of elements inside the partition $M$. Then, for any bipartition $M|\overline{M}$, with $|M|=k$, there is always a permutation $\tau_M\in S_N$ which maps $M|\overline{M}$ to $1...k|k+1...N$. We define
\begin{equation}
Q_{1\dots k|k+1\dots N}=
\sum\limits_{\substack{M|\overline{M}:\\|M|=k}}V(\tau_M)P_{M|\overline{M}} V(\tau_M)^\dagger,
\end{equation}
which is a positive operator whose partial transpose of the qubits $1\dots k$ is also positive. This is true since
\begin{equation}
\left[V(\tau_M)P_{M|\overline{M}} V(\tau_M)^\dagger\right]^{T_{1\dots k}} \!= V(\tau_M) P_{M|\overline{M}}^{T_M} V(\tau_M)^\dag
\end{equation}
holds for each term in the decomposition and $P_{M|\overline{M}}$ being PPT for this bipartition.
We can now simplify Eq.~\eqref{pptmixerPI2} via
\begin{align}
\rho^{pmix}_{(1\dots N)}&
=\sum\limits_{\substack{M|\overline{M}}}\dfrac{1}{N!}\sum_{\substack{\pi\in S_N}} V(\pi \tau_M^{-1}\tau_M)P_{M|\overline{M}} V(\pi \tau_M^{-1}\tau_M)^\dagger
\nonumber
\\
&\!\!\!\!\!\!\!\!=\sum\limits_{\substack{M|\overline{M}}}\dfrac{1}{N!}\sum_{\substack{\pi'\in S_N}} V(\pi')V(\tau_M)P_{M|\overline{M}} V(\tau_M)^\dagger V(\pi')^\dagger
\nonumber
\\
&\!\!\!\!\!\!\!\!=\sum\limits_{\substack{k=1}}^{N'/2}\dfrac{1}{N!}\sum_{\substack{\pi'\in S_N}} V(\pi')Q_{1...k|k+1...N} V(\pi')^\dagger,
\end{align}
where in the first step we defined the permutation $\pi'$ as $\pi \tau_M^{-1}$. Furthermore, the $Q_{1\dots k|k+1\dots N}$ can without loss of generality be chosen to be invariant under any permutation $\pi_k\in S_k$ of the first $k$ or any $\pi_{\bar{k}}\in S_{\bar{k}}$ of the last $\bar{k}\equiv N-k$ qubits without altering the property that it is PPT. This follows since we can actively use the symmetrization as 
\begin{align}
&\sum_{\substack{\pi\in S_N}} V(\pi)Q_{1\dots k|k+1\dots N} V(\pi)^\dagger
\nonumber
\\
&=\dfrac{1}{k!} \sum_{\substack{\pi\in S_N\\\pi_k\in S_k}} V(\pi\pi_k^{-1}\pi_k)Q_{1\dots k|k+1\dots N} V(\pi\pi_k^{-1}\pi_k)^\dagger
\nonumber\\
&=\dfrac{1}{k!}\sum_{\substack{\pi'\in S_N\\\pi_k\in S_k}} V(\pi')V(\pi_k)Q_{1\dots k|k+1\dots N}V(\pi_k)^\dagger V(\pi')^\dagger
\nonumber
\\
&= \sum_{\substack{\pi\in S_N}} V(\pi)Q_{(1\dots k)|k+1\dots N} V(\pi)^\dagger,
\end{align}
where we used the notation 
\begin{align}
Q_{(1\dots k)|k+1\dots N}\!=\!\dfrac{1}{k!}\sum_{\substack{p_k\in S_k}}\!\!V(\pi_k)Q_{1\dots k|k+1\dots N}V(\pi_k)^\dagger,\!
\end{align}
and defined the permutation $\pi'$ as $\pi \pi_k$. Note that since the partial transpose acts only on the first $k$ particles this state is still PPT. Analogously, we can also symmetrize $Q_{(1\dots k)|k+1\dots N}$ for permutations of the last $N-k$ particles to obtain Eq.~\eqref{pptmixturePI}. This concludes the proof.
\end{proof}

\begin{observation}\label{obs:2}
The number of parameters needed to characterize a permutationally invariant PPT mixture is of $\mathcal{O}(N^7)$, where $N$ is the number of qubits.
\end{observation}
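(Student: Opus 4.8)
The plan is to take the characterization of Observation~1 at face value and merely count the free parameters in the decomposition~\eqref{pptmixturePI}. There the PPT mixture is written as a sum over $k=1,\dots,N'/2$ — hence $\mathcal{O}(N)$ terms — of the symmetrized operators $\sum_{\pi\in S_N}V(\pi)Q_{(1\dots k)|(k+1\dots N)}V(\pi)^\dagger$. Since the symmetrization $\sum_\pi V(\pi)(\cdot)V(\pi)^\dagger$ is a fixed linear map, the only free parameters are those of the operators $Q_{(1\dots k)|(k+1\dots N)}$ themselves. So the task reduces to bounding, for each $k$, the dimension of the real vector space of Hermitian operators on $(\mathbb{C}^2)^{\otimes N}$ that are invariant under the subgroup $S_k\times S_{N-k}$ of permutations acting separately on the first $k$ and on the last $N-k$ qubits, and then to sum this bound over $k$. (Positivity and the PPT constraints are inequality constraints; they carve out a convex subset but do not change the parameter count.)

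To count the $S_k\times S_{N-k}$-invariant operators I would apply Schur--Weyl duality to each tensor factor separately: write $(\mathbb{C}^2)^{\otimes k}=\bigoplus_{j_A}\mathcal{H}_{j_A}\otimes\mathcal{K}_{j_A}$ and $(\mathbb{C}^2)^{\otimes(N-k)}=\bigoplus_{j_B}\mathcal{H}_{j_B}\otimes\mathcal{K}_{j_B}$ exactly as in Eq.~\eqref{eq:schur_weyl}, where now $\mathcal{H}_{j_A},\mathcal{K}_{j_A}$ refer to the $k$-qubit subsystem and $\mathcal{H}_{j_B},\mathcal{K}_{j_B}$ to the $(N-k)$-qubit subsystem. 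The group $S_k$ acts trivially on the $\mathcal{H}_{j_A}$ and irreducibly (and inequivalently for distinct $j_A$) on the $\mathcal{K}_{j_A}$, and similarly for $S_{N-k}$ on the second factor. By Schur's lemma the commutant of this representation of $S_k\times S_{N-k}$ is block diagonal in the pair $(j_A,j_B)$, and on the block $(j_A,j_B)$ it consists precisely of the operators $X_{j_Aj_B}\otimes\mathbb{1}_{\mathcal{K}_{j_A}}\otimes\mathbb{1}_{\mathcal{K}_{j_B}}$ with $X_{j_Aj_B}$ an arbitrary operator on $\mathcal{H}_{j_A}\otimes\mathcal{H}_{j_B}$. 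Thus an admissible $Q_{(1\dots k)|(k+1\dots N)}$ is fully specified by one Hermitian matrix of size $(2j_A+1)(2j_B+1)$ per pair $(j_A,j_B)$, so its parameter count is
\begin{align*}
\sum_{j_A,j_B}&\big[(2j_A+1)(2j_B+1)\big]^2\\
&=\Big(\sum_{j_A}(2j_A+1)^2\Big)\Big(\sum_{j_B}(2j_B+1)^2\Big)\\
&=\binom{k+3}{k}\binom{N-k+3}{N-k},
\end{align*}
using the identity $\sum_{j}(2j+1)^2=\binom{n+3}{n}$ already quoted in Sec.~\ref{sec_PIstates} for an $n$-qubit system. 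Each binomial factor is $\mathcal{O}(N^3)$, so each $Q_{(1\dots k)|(k+1\dots N)}$ carries $\mathcal{O}(N^6)$ parameters; summing over the $\mathcal{O}(N)$ values of $k$ gives the claimed $\mathcal{O}(N^7)$.

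The only step requiring genuine care is the Schur's-lemma argument that pins down the block structure of the $S_k\times S_{N-k}$-invariant operators and, in particular, rules out any terms mixing different $j_A$ or different $j_B$. This is where one must use that the symmetry here is only under the proper subgroup $S_k\times S_{N-k}\subset S_N$, that the induced representation is the outer tensor product of the two Schur--Weyl representations on the factors, and that the multiplicity-space irreps $\mathcal{V}_{j_A}$ of $S_k$ (resp. $\mathcal{V}_{j_B}$ of $S_{N-k}$) are pairwise inequivalent — the same ingredients that underlie the derivation of Eq.~\eqref{eq:block_structure}, now applied to each side of the bipartition. Everything else — counting the values of $k$, evaluating the binomial sums, and multiplying — is elementary.
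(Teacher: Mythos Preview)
Your argument is correct and follows essentially the same route as the paper: both invoke Observation~1, note that each $Q_{(1\dots k)|(k+1\dots N)}$ lies in the tensor product of the PI operator spaces on $k$ and on $N-k$ qubits (hence $\mathcal{O}(k^3)\cdot\mathcal{O}((N-k)^3)=\mathcal{O}(N^6)$ parameters, via Eq.~\eqref{decomposition}), and then sum over the $\mathcal{O}(N)$ bipartitions. Your explicit Schur's-lemma derivation of the block structure and the exact count $\binom{k+3}{k}\binom{N-k+3}{N-k}$ merely make precise what the paper states more briefly; the content is the same.
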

\begin{proof}
This scaling is due to the fact that Eq.~\eqref{pptmixturePI} exhibits two simplifications when compared to Eq.~\eqref{pptmixture}: The first one is that the number of bipartitions that need to be considered is only $N^\prime/2$ and the second is that $Q_{(1\dots k)|(k+1\dots N)}$ is permutationally invariant within each side of its respective bipartition. 

More specifically, if $\sigma^{\alpha}_{(1...k)}$ and $\sigma^{\beta}_{(k+1\dots N)}$ are operator basis elements for permutationally invariant operators of $k$ and $N-k$ respectively, any operator $Q_{(1\dots k)|(k+1\dots N)}$ can be written as
\begin{equation}
\label{decomposition}
Q_{(1\dots k)|(k+1\dots N)} =\sum_{\alpha\beta}c_{\alpha \beta}\sigma^{\alpha}_{(1...k)}\otimes\sigma^{\beta}_{(k+1...N)}.
\end{equation} 

As mentioned in Sec.~\ref{sec_PIstates} any permutationally invariant operator on $k$ qubits can be parametrized by $\mathcal{O}(k^3)$ parameters. Hence, the operator given by Eq.~\eqref{decomposition} has about $\mathcal{O}\left[(N-k)^3k^3\right]$ parameters, which, at most, can be $\mathcal{O}(N^6)$ since $k$ can be roughly $N/2$. This, together with the fact that one has to consider about $N/2$ bipartitions, leads to an overall number of parameters to describe a PI PPT mixture of $\mathcal{O}(N^7)$. This finishes this observation.
\end{proof}

\subsection{Necessity and sufficiency for PI three-qubit states} \label{necsuf3qubits}

Next we show that the method of PPT mixtures is not just only necessary, but also sufficient for biseparability of a permutationally invariant three-qubit system. Note that this result does not extend to systems of more particles where explicit counterexamples are known~\cite{toth09a,tura12a}.

\begin{observation}
A permutationally invariant three-qubit state is biseparable if and only it is a PPT mixture.
\end{observation}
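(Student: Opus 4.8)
The forward direction is immediate from Sec.~\ref{sec:IIA}: every biseparable state is a PPT mixture, so in particular every biseparable permutationally invariant three-qubit state is one. It remains to show that a permutationally invariant three-qubit PPT mixture is biseparable, and for this I would argue as follows.

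First I would specialize Observation~1 to $N=3$. Since $N$ is odd we have $N'=2$, so the sum over $k$ in Eq.~\eqref{pptmixturePI} reduces to the single term $k=1$, and any permutationally invariant three-qubit PPT mixture can be written as
\begin{equation}\label{eq:3q_form}
\rho_{(123)}=\sum_{\pi\in S_3} V(\pi)\,Q\,V(\pi)^\dagger ,
\end{equation}
where $Q\equiv Q_{(1)|(23)}$ is a positive operator that is PPT across the cut $1|23$ and invariant under the exchange of qubits $2$ and $3$. The key claim is that such a $Q$ is in fact \emph{separable} across $1|23$. Granting this, each term $V(\pi)QV(\pi)^\dagger$ in Eq.~\eqref{eq:3q_form} is separable with respect to the bipartition that isolates the qubit~$\pi(1)$, and as $\pi$ ranges over $S_3$ one obtains two such terms for each of the three bipartitions $1|23$, $2|13$, $3|12$; hence $\rho_{(123)}$ is a convex combination of states separable with respect to a fixed bipartition, i.e.\ biseparable.

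To prove the key claim I would use the $2\leftrightarrow3$ symmetry to split the Hilbert space of qubits $2$ and $3$ into its symmetric part $\mathcal{S}\cong\mathbb{C}^3$ and its one-dimensional antisymmetric part $\mathcal{A}$. Since $Q$ is invariant under conjugation by the swap of qubits $2,3$ it commutes with it, hence is block diagonal, $Q=Q_S\oplus Q_A$, with respect to $\mathbb{C}^2\otimes\mathcal{S}$ and $\mathbb{C}^2\otimes\mathcal{A}$. Because the two projectors onto these subspaces act only on qubits $2,3$, the partial transposition $T_1$ commutes with this decomposition, so $Q^{T_1}=Q_S^{T_1}\oplus Q_A^{T_1}$ and the PPT property of $Q$ is equivalent to $Q_S$ and $Q_A$ being PPT separately (the transpose of the block $Q_S$ being taken on its $\mathbb{C}^2$, i.e.\ qubit-$1$, factor). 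The antisymmetric block is trivial: $\mathcal{A}$ is spanned by the singlet $\ket{\psi^-}$, so $Q_A=\varrho_1\otimes\ket{\psi^-}\bra{\psi^-}$ is a product operator, hence separable across $1|23$. For the symmetric block, $Q_S$ is a positive, PPT operator on $\mathbb{C}^2\otimes\mathcal{S}\cong\mathbb{C}^2\otimes\mathbb{C}^3$; recalling the classical fact that in $2\times2$ and $2\times3$ systems positivity of the partial transpose is equivalent to separability, $Q_S=\sum_k q_k\ket{a_k}\bra{a_k}\otimes\ket{b_k}\bra{b_k}$ with $\ket{a_k}\in\mathbb{C}^2$ and $\ket{b_k}\in\mathcal{S}\subset(\mathbb{C}^2)^{\otimes 2}$. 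Since each $\ket{b_k}$ is a legitimate (symmetric) two-qubit state, this is a separable decomposition across $1|23$ as well, and therefore $Q=Q_S+Q_A$ is separable across $1|23$, as claimed.

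The only non-routine point, and the step I expect to be the main obstacle, is exactly this reduction to a $2\times3$ problem: one must recognize that the permutation symmetry on the ``$23$'' side forces $Q$ into a direct sum of a $2\times3$ block and a one-dimensional (hence trivially separable) block, and then check carefully that this block structure is compatible both with the partial transposition $T_1$ and with the identification $\mathcal{S}\cong\mathbb{C}^3$, so that the Peres--Horodecki theorem for $2\times3$ systems applies verbatim. Everything else — the passage from Observation~1 to Eq.~\eqref{eq:3q_form}, lifting the separable decompositions of $Q_S$ and $Q_A$ back to the three-qubit space, and the final symmetrization over $S_3$ that distributes the six terms among the three bipartitions — is straightforward bookkeeping.
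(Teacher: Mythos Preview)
Your proof is correct and follows essentially the same route as the paper: specialize Observation~1 to $N=3$ so that only the $k=1$ term survives, use the $2\leftrightarrow3$ symmetry of $Q$ to split qubits $2,3$ into the three-dimensional symmetric and one-dimensional antisymmetric subspaces, observe that the antisymmetric block is a product state while the symmetric block is a PPT $2\times3$ system and hence separable by the Horodecki result, and conclude that the full symmetrization is biseparable. Your write-up is in fact slightly more explicit than the paper's about why the block decomposition commutes with $T_1$ and how the $S_3$-symmetrization distributes the terms among the three bipartitions, but the argument is the same.
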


\begin{proof}
Any biseparable state is also a PPT mixture as explained in the Sec.~\ref{sec:IIA}. Thus we are left to show that a PPT mixture of a three-qubit permutationally invariant state is indeed biseparable. 

For that we can without loss of generality assume the special form as given by Observation 1. Since we only have non-trivial bipartitions of $1$ vs.~$2$ particles, we obtain the following form $\rho^{pmix}_{(ABC)} = [Q_{A|(BC)}]_{\rm PI}$ or, more explicitly,
\begin{align}
\nonumber
\rho^{pmix}_{(ABC)}= \frac{1}{3} \big(  & \rho_{A|(BC)} + V_{AB}  \rho_{A|(BC)} V_{AB}^\dag  \\  &  + V_{AC}  \rho_{A|(BC)} V_{AC}^\dag \big),
\end{align}
where $V_{AB},V_{AC}$ refer to appropriate permutations. Here $\rho_{A|(BC)}$ stands for a PPT state with respect to partition $A|BC$, which additionally remains invariant under exchange of system $B$ and $C$. The structure of permutationally invariant states implies that the two qubits $BC$ couple to a spin-$1$ system, given by the symmetric subspace $Sym(BC)$ spanned by $\ket{11},\ket{\psi^+}=(\ket{01}+{10}))/\sqrt{2},\ket{00}$, and the spin-$0$ antisymmetric part $\ket{\psi^-}=(\ket{01}-\ket{10})/\sqrt{2}$. Thus this state (as well as its partial transposition with respect to $A$) can be decomposed into two parts as 
\begin{equation}
\label{eq:help1}
\rho_{A|(BC)} = q  \sigma_{A|Sym(BC)} + (1-q) \omega_{A}\otimes\ket{\psi^-}\bra{\psi^-}.
\end{equation}
with $\sigma_{A|Sym(BC)}$ being PPT. Since the symmetric subspace is three dimensional, this state is effectively a qubit-qutrit system for which PPT is equivalent to separability~\cite{horodecki96b}. Hence the state of Eq.~(\ref{eq:help1}) is separable and consequently the PI PPT mixture biseparable.
\end{proof}

This Observation complements the results of Ref.~\cite{guehne11a}, where an analogue result was shown for states with a different symmetry, namely graph-diagonal states of three and four qubits.

\section{Details of the SDP}\label{sec_sdpdetails}
Via the simplified form of a PPT mixture of a generic PI state the corresponding SDP can now be formulated as
\begin{align}
\label{SDPPI}
{\rm min}&~~-s \\ 
\nonumber
{\rm s.t.}&~~\rho_{(1\dots N)}=\sum\limits_{\substack{k=1}}^{N'/2}  {\left[Q_{(1\dots k)|(k+1\dots N)}\right]}_{PI},\\ 
\nonumber
&\:\begin{array}{c}
Q_{(1...k)|(k+1...N)}\geq s\mathbb{1},\\
Q_{(1...k)|(k+1...N)}^{T_{1\dots k}}\geq s\mathbb{1}, \end{array} ~ \mbox{ for all } k.
\end{align}
Apart from the polynomial scaling of the number of parameters involved in the SDP (c.f. Observation 2), it is also crucial to guarantee that the effort needed to verify the constraints scales also polynomially. For this, it is very important that the parameters are organized in blocks for if they were spread throughout the matrix in an unstructured way, checking its positivity would still be exponentially hard. Thus, in order not to deal with matrices of size $4^N$ we want to write the SDP constraints as constraints on smaller blocks that constitute the matrices involved in the program. It can be seen from Eq.~\eqref{decomposition} that the matrices $Q_{(1...k)|(k+1...N)}$, although not fully permutationally invariant, must also have a block structure, in a suitably chosen basis. We know from Sec.~\ref{sec_PIstates} that the operator basis elements $\sigma^{\alpha}_{(1...k)}$ and $\sigma^{\beta}_{(k+1...N)}$ have the block structure,
\begin{align}
\sigma^{\alpha}_{(1...k)}&=\overset{k/2}{\underset{j={j_k}_{min}}{\bigoplus}} C^{\alpha}_{j_{k}}\otimes\mathbb{1}_{\mathcal{K}_{j_{k}}},\\
\sigma^{\beta}_{(k+1...N)}&=\overset{\bar{k}/2}{\underset{j={j_{\bar{k}}}_{min}}{\bigoplus}} D^{\beta}_{j_{\bar{k}}}\otimes\mathbb{1}_{\mathcal{K}_{j_{\bar{k}}}}. 
\end{align}
From Eq.~\eqref{decomposition} it then follows
\begin{align}
& \!\!\!\!\!\!\!\!\!\!\!\!\!\!\!
Q_{(1...k)|(k+1...N)}
\nonumber \\
&=\sum_{\alpha,\beta}c_{\alpha\beta}^k\underset{j_k,j_{\bar{k}}}{\bigoplus}\left(C^{\alpha}_{j_{k}}\otimes D^{\beta}_{j_{\bar{k}}}\right)\otimes\left(\mathbb{1}_{\mathcal{K}_{j_{k}}}\otimes\mathbb{1}_{\mathcal{K}_{j_{\bar{k}}}}\right)
\nonumber\\
&\cong\underset{j_k,j_{\bar{k}}}{\bigoplus}B^k_{j_k,j_{\bar{k}}}\otimes\left(\mathbb{1}_{\mathcal{K}_{j_{k}}\otimes\mathcal{K}_{j_{\bar{k}}}}\right),
\label{tensor_blocks}
\end{align}
where we defined $B^k_{j_k,j_{\bar{k}}}=\sum_{\alpha,\beta}c_{\alpha\beta}^k C^{\alpha}_{j_{k}}\otimes D^{\beta}_{j_{\bar{k}}}$. The structure of Eq.~\eqref{tensor_blocks} is similar to Eq.~\eqref{eq:block_structure} which shows that, in a suitably ordered basis, the operator $Q_{(1\dots k)|(k+1\dots N)}$ is also block diagonal. Each block $B^k_{j_k,j_{\bar{k}}}$ has dimension $(2j_k+1)(2j_{\bar{k}}+1)$ and appears exactly $\dim(\mathcal{K}_{j_k})\dim(\mathcal{K}_{j_{\bar{k}}})$ times in the main diagonal. In this way, the SDP needs to store only the different blocks and keep track of how many times each block appears.
 
The task now is to define the SDP in terms of the these blocks $B^k_{j_k,j_{\bar{k}}}$ which constitute $Q_{(1...k)|(k+1...N)}$ and the corresponding blocks $B_j$ of the given PI state $\rho_{(1\dots N)}$. This task is direct for the matrix inequality constraints which translate to corresponding matrix inequalities
\begin{equation}
B^k_{j_k,j_{\bar{k}}}\geq s\mathbb{1},\hspace{6pt}({B^m_{j_k,j_{\bar{k}}}})^{T_{1\dots k}}\geq s\mathbb{1}
\end{equation}
for all blocks. Note that the operation of partial transposition is easy to implement, again due to the tensor product structure of Eq.~\eqref{decomposition}, because
\begin{equation}
Q_{(1\dots k)|(k+1\dots N)}^{T_{1\dots k}} =\sum_{\alpha\beta}c_{\alpha \beta} [\sigma^{\alpha}_{(1...k)}]^T\otimes\sigma^{\beta}_{(k+1...N)}
\end{equation} 
where $T$ denotes the usual transposition.
 
The implementation of the equality constraint, though, requires more care. We know that ${\left[Q_{(1\dots k)|(k+1\dots N)}\right]}_{PI}$ is block diagonal in the coupled spin basis of all $N$ particles $\ket{j,m,\alpha_j}$, as the original PI state, but the basis in which each $Q_{(1\dots k)|(k+1\dots N)}$ is block diagonal is a different one. In fact, it is diagonal in the basis $\ket{j_k,m_k,\alpha_{j_k};j_{\bar{k}},m_{\bar{k}},\alpha_{j_{\bar k}}} = \ket{j_k,m_k,\alpha_{j_k}} \otimes \ket{j'_{\bar{k}},m'_{\bar{k}},\alpha_{j_{\bar k}}}$. In general, the basis transformation between two spins $j_k,j_{\bar k}$ to a combined total spin $j$ is given by the Clebsch-Gordan coefficients
\begin{equation}\label{eq:clebschgordan}
\ket{j,m}\!=\!\!\!\!\!\sum\limits_{\substack{-j_k\leq m_k\leq j_k\\-j_{\bar{k}}\leq m_{\bar{k}}\leq j_{\bar{k}}}}\!\!\!\!\!\!\!\!\! \braket{j_k,m_k;j_{\bar{k}},m_{\bar{k}}|j,m}
\ket{j_k,m_k;j_{\bar{k}},m_{\bar{k}}}.
\end{equation}
This transformation holds for each spin, irrespective of the multiplicative spaces. However, via the multiplicative spaces one keeps track of how many spins $j_k$ of system $1\dots k$, and how many spins $j_{\bar k}$ of system $k+1 \ldots N$ couple with each other. To compute the resulting blocks $B_j$ of $\sum_k{\left[Q_{(1\dots k)|(k+1\dots N)}\right]}_{PI}$ the procedure is hence as follows:
\begin{itemize}
\item From the blocks $B_{j_k j_{\bar k}}^k$ of $Q_{(1\dots k)|(k+1\dots N)}$, typically given in the basis $\ket{j_k,m_k;j_{\bar k},m_{\bar k}}$, one first computes their contribution to each total spin $j$, using $\ket{j,m}$, via the transformation of Eq.~\eqref{eq:clebschgordan}. This result is denoted as $B^{k,j}_{j_k j_{\bar k}}$. Note that one only gets a non-trivial matrix if the two individual spins can form at all a total spin $j$, \ie, $|j_k - j_{\bar k}| \leq j \leq j_k + j_{\bar k}.$  
\item Afterwards one performs the average over all possibilities, more precisely, 
\begin{equation}
B^j_k = \sum_{j_k j_{\bar k}} \frac{\dim(\mathcal{K}_{j_k})\dim(\mathcal{K}_{j_{\bar k}})}{\dim(\mathcal{K}_{j})} B^{k,j}_{j_k j_{\bar k}}.
\end{equation}
Here $\dim(\mathcal{K}_{j_k})\dim(\mathcal{K}_{j_{\bar k}})$ is the number of spins $j_{k},j_{\bar k}$ (which couple to a spin $j$) in the original operator.
\end{itemize}
This way, after summing over $k$, the RHS of the equality constraint of Eq.~\eqref{SDPPI} is computed. Due to its symmetry, $Q_{(1\dots k)|(k+1\dots N)}$ has a polynomial number of parameters, so the basis transformation requires the computation of only a polynomial number of Clebsch-Gordan coefficients. This discussion leads to the final observation of this work: 

\begin{observation}
The SDP to detect genuine multipartite entanglement of PI states of $N$ qubits via the concept of PPT mixtures can be formulated in terms of $\mathcal{O}(N^3)$ matrices whose size is at most $\mathcal{O}(N^2)$.
\end{observation}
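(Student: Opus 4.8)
The plan is to assemble the ingredients prepared in Section~\ref{sec_sdpdetails}: start from the reduced semidefinite program~\eqref{SDPPI}, replace every matrix occurring in it by its block decomposition~\eqref{tensor_blocks}, count the resulting blocks, and bound their dimension. First I would invoke Observation~1 to restrict the sum to the $N'/2=\mathcal{O}(N)$ bipartitions $1\dots k|k+1\dots N$ and to take each $Q_{(1\dots k)|(k+1\dots N)}$ permutationally invariant within each side; this is precisely the hypothesis under which Eq.~\eqref{tensor_blocks} holds, so every $Q_{(1\dots k)|(k+1\dots N)}$ is block diagonal, in a fixed basis, with blocks $B^k_{j_k,j_{\bar k}}$ labelled by pairs $(j_k,j_{\bar k})$ satisfying $j_k\le k/2$ and $j_{\bar k}\le (N-k)/2$, each appearing $\dim(\mathcal{K}_{j_k})\dim(\mathcal{K}_{j_{\bar k}})$ times.

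The counting of matrices is then immediate. For fixed $k$ there are $\mathcal{O}(N^2)$ admissible pairs $(j_k,j_{\bar k})$, so summing over the $\mathcal{O}(N)$ values of $k$ yields $\mathcal{O}(N^3)$ blocks $B^k_{j_k,j_{\bar k}}$; these are the matrix variables of the SDP, and the $\mathcal{O}(N)$ blocks $B_j$ of the fixed state $\rho_{(1\dots N)}$ are negligible beside them. For the size, each $B^k_{j_k,j_{\bar k}}$ is a square matrix of dimension $(2j_k+1)(2j_{\bar k}+1)\le (k+1)(N-k+1)$, a quantity which is largest for $k\approx N/2$, where it is $\mathcal{O}(N^2)$; hence no block exceeds size $\mathcal{O}(N^2)$ (and the $B_j$ have size $2j+1=\mathcal{O}(N)$).

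It then remains to verify that every constraint of~\eqref{SDPPI} can be phrased through these blocks. The positivity constraints are direct: since the block structure sits in a fixed basis, $Q_{(1\dots k)|(k+1\dots N)}\ge s\mathbb{1}$ is equivalent to $B^k_{j_k,j_{\bar k}}\ge s\mathbb{1}$ for every block; and because the transpose (taken in the computational basis, in which the $V(\pi)$ are real) of a permutationally invariant operator on $k$ qubits is again such an operator, the partial transpose ${Q_{(1\dots k)|(k+1\dots N)}}^{T_{1\dots k}}$ is again block diagonal with the same index set and the same block dimensions, so its positivity becomes an equivalent family of $\mathcal{O}(N^2)$-sized matrix inequalities. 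For the equality constraint I would use the two-step procedure of Section~\ref{sec_sdpdetails}: a Clebsch--Gordan transformation~\eqref{eq:clebschgordan} maps each block $B^k_{j_k,j_{\bar k}}$ to its contributions $B^{k,j}_{j_k j_{\bar k}}$ to the admissible total spins $j$ with $|j_k-j_{\bar k}|\le j\le j_k+j_{\bar k}$, and the multiplicity-weighted average produces $B^j_k$; the equality $\rho_{(1\dots N)}=\sum_k[Q_{(1\dots k)|(k+1\dots N)}]_{PI}$ then reduces to the $\mathcal{O}(N)$ matrix equalities $B_j=\sum_k B^j_k$ on matrices of size $\mathcal{O}(N)$. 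Since each $Q_{(1\dots k)|(k+1\dots N)}$ carries only polynomially many parameters, only polynomially many Clebsch--Gordan coefficients enter, and the map is linear, hence an admissible SDP constraint.

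The step demanding the most care is this last one: one must check that projecting $Q_{(1\dots k)|(k+1\dots N)}$ onto the permutationally invariant subspace of all $N$ qubits interacts with the block decomposition as claimed, namely that it neither couples different total spins $j$ nor creates cross terms, and that the multiplicities $\mathbb{1}_{\mathcal{K}_{j_k}}\otimes\mathbb{1}_{\mathcal{K}_{j_{\bar k}}}$ collapse exactly to the weights $\dim(\mathcal{K}_{j_k})\dim(\mathcal{K}_{j_{\bar k}})/\dim(\mathcal{K}_j)$. This follows from the same Schur-lemma computation used to establish Eq.~\eqref{eq:block_structure}, now carried out across the bipartition and combined with Eq.~\eqref{eq:clebschgordan}. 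Granting this book-keeping, collecting the counts --- $\mathcal{O}(N^3)$ blocks, each of dimension at most $\mathcal{O}(N^2)$ --- completes the proof.
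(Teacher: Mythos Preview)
Your proposal is correct and follows essentially the same approach as the paper. The paper's argument is the short paragraph immediately following the observation: it counts $\mathcal{O}(N^2)$ blocks per $Q_{(1\dots k)|(k+1\dots N)}$, multiplies by the $N'/2$ bipartitions to obtain $\mathcal{O}(N^3)$ blocks, and identifies the largest block as the one with $k=N'/2$, $j_k=N'/4$, $j_{\bar k}=(2N-N')/4$, of dimension $\mathcal{O}(N^2)$; your proposal reproduces this counting and additionally recapitulates, in somewhat greater detail, the block-level reformulation of the constraints that the paper establishes in the discussion preceding the observation.
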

Since the number of different blocks of a general PI operator on $k$ parties is $\mathcal{O}(k)$, each operator $Q_{(1\dots k)|(k+1\dots N)}$ has about $\mathcal{O}(N^2)$ different blocks. Since we need roughly $N/2$ of these operators for a PI PPT mixture we have of $\mathcal{O}(N^3)$ operators $B^k_{j_k,j_{\bar{k}}}$ in total. For each of these operators we need to check two matrix inequalities. Furthermore, the biggest of the blocks is the one with $k=N^\prime/2$ and $j_k=N^\prime/4, j_{\bar k} = (2N-N^\prime)/4$, hence the maximal dimension is $\mathcal{O}(N^2)$.

\section{Examples}\label{sec:V}
In this section, we present some examples for the application of the SDP, illustrating the strength of the PPT mixtures. Our first example is the calculation of the white noise tolerance for Dicke states which we compare to the method of Ref.~\cite{huber_dicke}. In our second example, we consider a mixtures of a GHZ state, a W state and white noise and compare the detection range with the ones of Refs.~\cite{guehneseevinck, huber_ghzwid}, which were the best known results so far for this class of states. In both cases, we achieve significantly improved results. However, since our method is based on a numerical approach, it is limited by the memory of the computer and we could only run it for states of at most $10$ qubits with the first prototype of the program. Note that we did not further optimize the algorithm for these special kind of states. In contrast the criteria of Refs.~\cite{guehneseevinck, huber_ghzwid} are analytic and can therefore be applied to arbitrary qubit numbers. 
Further criteria from the approach of PPT mixtures, which similarly rely on analytic estimates will be discussed elsewhere \cite{marcel2013}. 

\emph{Example 1.---}Dicke states have first been studied in the context of light emission from a cloud of atoms \cite{dicke_states} and have been prepared in many experiments \cite{haeffner05, weinfurter06}. The symmetric $N$-qubit Dicke state with $k$ excitations is defined as the superposition of all basis states with $k$ excitations, 
\begin{equation}
\ket{D_{N,k}}=\frac{1}{\sqrt{{N\choose k}}}
\sum_{\pi\in S_N}V(\pi)\ket{1}^{\otimes k}\otimes \ket{0}^{\otimes N-k}
\end{equation}
and is therefore a permutationally invariant state. For example, the four-qubit Dicke state with two excitations is given by $\ket{D_{4,2}}=(\ket{0011}+\ket{0101}+\ket{0110}+\ket{1001}+\ket{1010}+\ket{1100})/\sqrt{6}.$

\begin{figure}[t!]\label{fig:noisetol_dicke}
\centering
\includegraphics[scale=0.35,angle=-90,trim=0cm 3.cm 3cm 0cm]{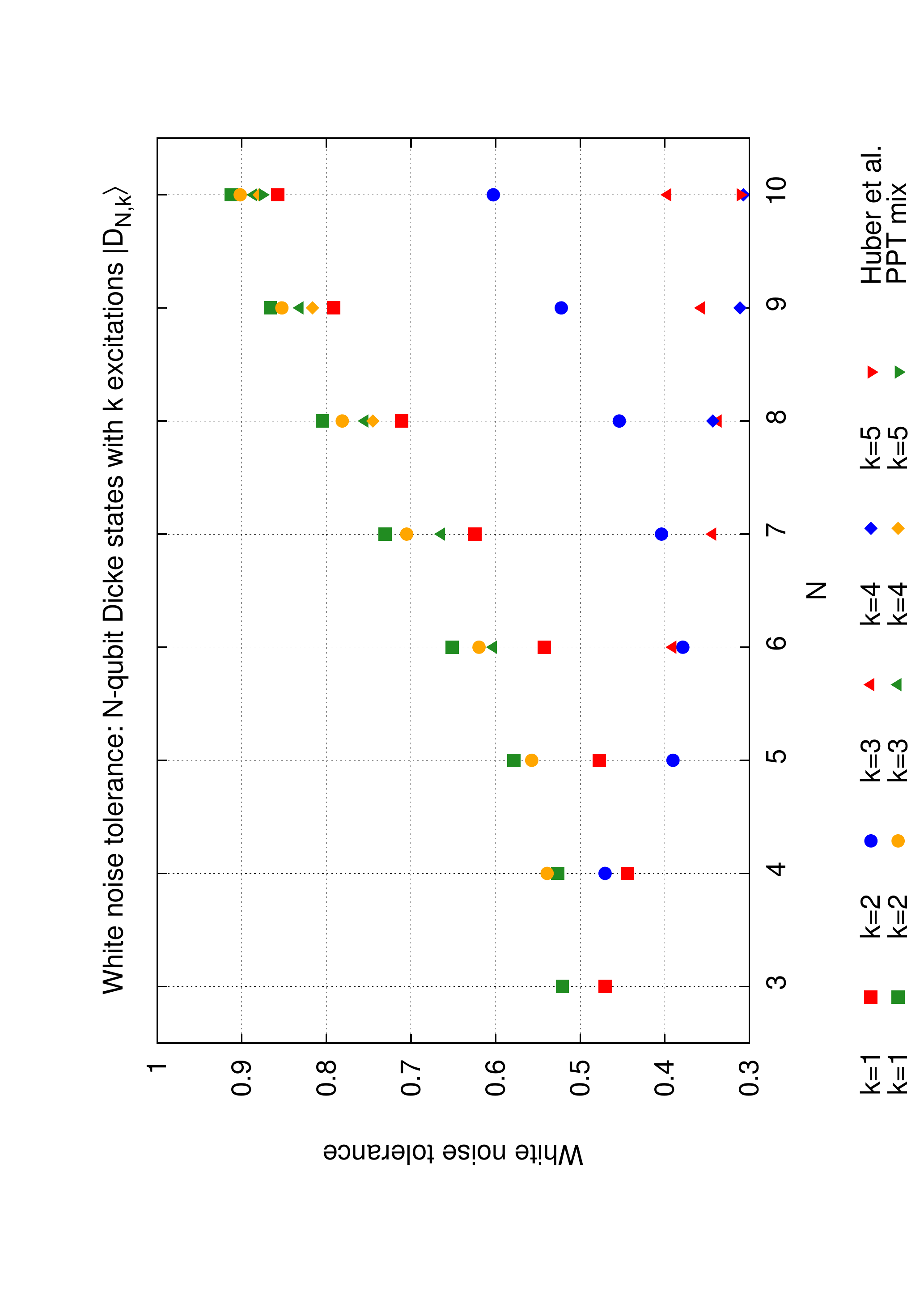}
\vspace{0.9cm}
\caption{Comparison between the PPT mixture criterion (filled symbols) and the criterion of Huber et al. \cite{huber_dicke} (empty symbols) of the white noise tolerance for Dicke states $\ket{D_{N,k}}$. We show the white noise tolerances for $N$ up to $10$ qubits and $k$ up to $N/2$. The PPT mixture criterion is more robust to white noise and in some cases, the difference of the white noise tolerance between both criteria is very significant reaching values larger than 40\%.}
\end{figure}

We computed the white noise tolerance for Dicke states of up to $10$ qubits and up to $N/2$ excitations comparing it to the criterion of Ref.~\cite{huber_dicke}. From Fig.~3 we see that the PPT mixture criterion is always more robust to noise and the difference is more significant for larger number of qubits and excitations. The improvement reaches values larger than 40\%, which should prove itself useful for current Dicke state experiments.

\emph{Example 2.---}Other well known states, which are also invariant under permutations are the GHZ and W state. The GHZ state is defined as
\begin{equation}
\ket{GHZ_N}=\dfrac{1}{\sqrt{2}}\left(\ket{0}^{\otimes N}+\ket{1}^{\otimes N}\right),
\end{equation}
while the W state is the Dicke state with one excitation,
\begin{equation}
\ket{W_N}=\dfrac{1}{\sqrt{N}}\left(\ket{10...0}+\ket{01...0}+...+\ket{0...01}\right).
\end{equation}
In this example, we consider the following $N$-qubit states
\begin{equation}\label{eq:ghzwid}
\rho(p_1,p_2)=p_1\rho_{GHZ_N}+p_2 \rho_{W_N}+(1-p_1-p_2)\dfrac{\mathbb{1}}{2^N},
\end{equation}
with $\rho_{GHZ_N}=\ket{GHZ_N}\bra{GHZ_N}$, $\rho_{W_N}=\ket{W_N}\bra{W_N}$. These states are a convex combination of a GHZ state, a W state and white noise. 

Such states can be represented by a point in a two-dimensional plane whose coordinates are given by $p_1$ and $p_2$, as shown in Fig.~\ref{fig:ghzwid3} and Fig.~\ref{fig:ghzwid8}. Of course, only certain pairs of $p_1,p_2$ correspond to valid quantum states, hence, only the lower triangle shown in these figures describe actual quantum states of this class. In these figures, which corresponds to three and eight qubits respectively, we furthermore show the set of states which is detected by the PPT mixture criterion in comparison to the method developed in Refs.~\cite{guehneseevinck, huber_ghzwid}. In both cases the set of states detected by the PPT mixtures is much larger than the one verified by the aforementioned criteria and this difference grows with the number of qubits. In Fig.~\ref{fig:ghzwid3}, in fact, the PPT mixtures criterion is optimal, cf. Sec.~\ref{necsuf3qubits}, so the states that have a PPT mixture are biseparable.

\begin{figure}[t]
\centering
\includegraphics[scale=0.4, trim=4cm 2cm 0cm 0cm]{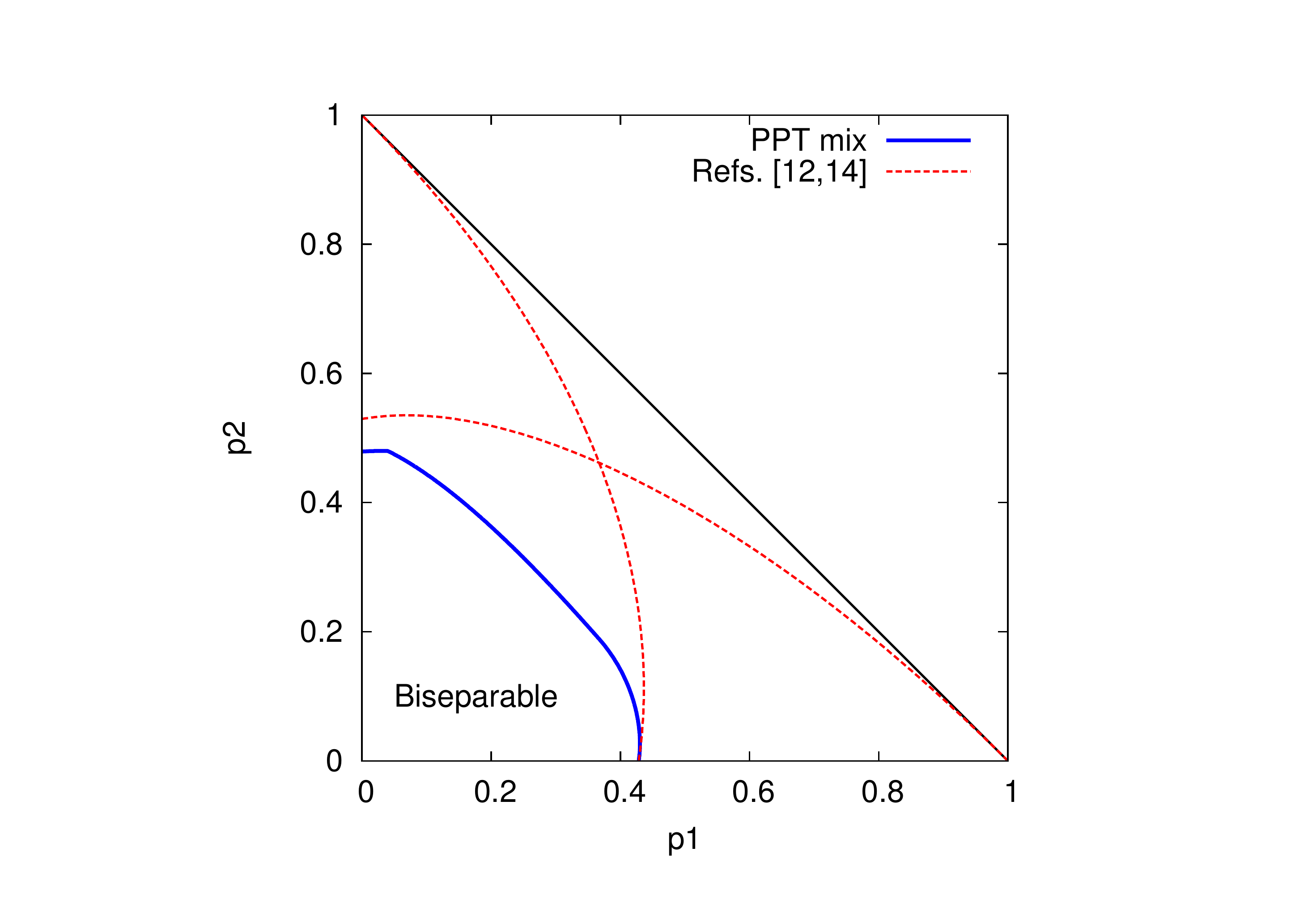}
\caption{Here, we show which three-qubit states (lower triangular) as defined by Eq.~(\ref{eq:ghzwid}) are detected by the PPT mixtures criterion as genuinely multipartite entangled (area outside the solid line). We compare it to the criteria presented in Refs.~\cite{guehneseevinck, huber_ghzwid}, consisting of two inequalities, one optimized for the GHZ state and the other for W state (dashed lines). The PPT mixtures criterion represents a significant improvement and is optimal for three-qubit PI states, so the states inside the solid line are biseparable.}
\label{fig:ghzwid3}
\end{figure}
\begin{figure}[h!]
\includegraphics[scale=0.4, angle=-90, trim=0cm 4cm 2cm 0cm]{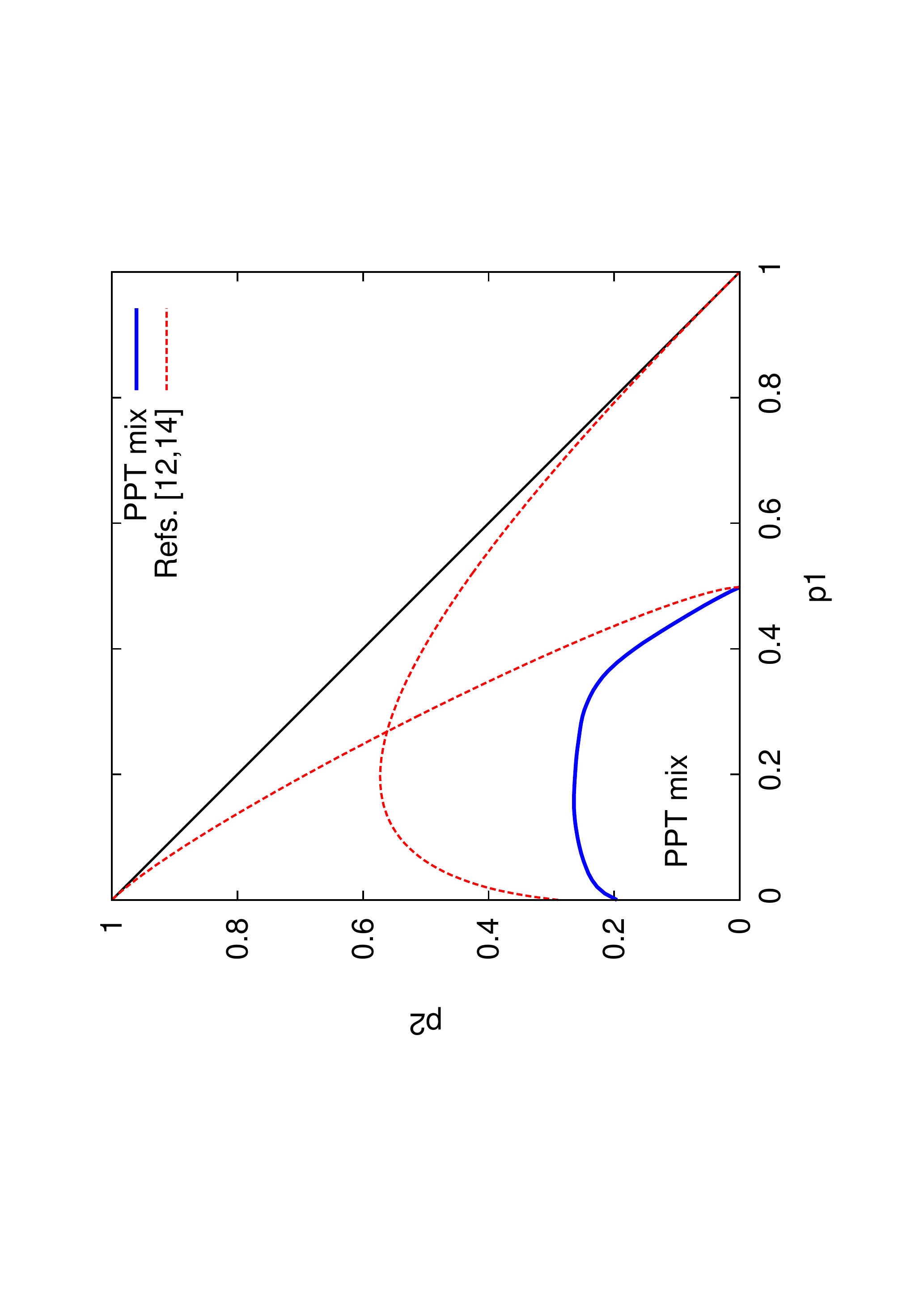}
\caption{The same as Fig.~\ref{fig:ghzwid3}, but for eight-qubit states. The PPT mixture criterion represents a very significant improvement, but note that here we cannot prove that the PPT mixture criterion is necessary and sufficient for biseparability.}
\label{fig:ghzwid8}
\end{figure}
\section{Conclusions} \label{sec:VI}
In this paper we tailored the detection of genuine multipartite entanglement via PPT mixtures for permutationally invariant states. In contrast to a generic $N$-qubit state, where the question of characterizing  PPT mixtures scales exponentially with the number of particles, our optimization for this special class of states only requires a polynomial scaling of the order $\mathcal{O}(N^7)$. This was possible by deriving a more restricted but still general form of a PPT mixture using the additional symmetry of the state. Via this method, we were able to analyze more rigorously the entanglement for system sizes where the original numerical PPT mixture method would fail.  

In addition, we have  shown that the criterion of PPT mixtures completely solves the question of genuine multipartite entanglement for permutationally invariant three-qubit states. This furthermore supports the conjecture, motivated in Ref.~\cite{guehne11a}, that PPT mixtures are necessary and sufficient for biseparability of three qubits. We leave this open for further discussion. 

On more general ground, we believe that the development or optimization of existing analysis tools to larger-scale systems is a mandatory step for a well-grounded investigation of the properties of systems with many particles and its experimental implementation. This should help to close the gap between methods for small system playgrounds and the really interesting system sizes that could deserve the term quantum computer at some time.

\section{Acknowledgements}
We thank Marcel Bergmann for discussions and Marcus Huber for providing the curves for the comparison with the PPT mixture criterion. This work has been supported by the EU (Marie Curie CIG 293993/ENFOQI) and the BMBF (Chist-Era Project QUASAR). LN acknowledges also the support from project IT-PQuantum, as well as from Funda\c{c}\~{a}o para a Ci\^{e}ncia e a Tecnologia (Portugal), namely through programme POC\-TI/PO\-CI/PT\-DC and project PTDC/EEA-TEL/103402/2008 QuantPrivTel, partially funded by FEDER (EU), and from the European Union's Seventh Framework Programme (FP7/2007-2013) under grant agreement nb. 318287 project LANDAUER.

\bibliographystyle{apsrev}
\bibliography{biblio}

\end{document}